\title{Breaking the Dilemma of Medical Image-to-image Translation}
\author{
  Lingke Kong\footnotemark[1]\\
  Manteia Tech\\
  \texttt{konglingke@manteiatech.com} \\
  \And
   Chenyu Lian\footnotemark[1] \\
   Xiamen University \\
  \texttt{cylian@stu.xmu.edu.cn} \\
   \AND
   Detian Huang  \\
   Huaqiao University \\
   \texttt{huangdetian@hqu.edu.cn} \\
   \And
   Zhenjiang Li\\
   Shandong University \\
   \texttt{zhenjli1987@163.com} \\
   \And
   Yanle Hu\footnotemark[2] \\
   Mayo Clinic Arizona\\
   \texttt{Hu.Yanle@mayo.edu} \\
   \And
   Qichao Zhou\footnotemark[2]\\
   Manteia Tech\\
   \texttt{zhouqc@manteiatech.com} \\
}
\begin{document}
\renewcommand{\thefootnote}{\fnsymbol{footnote}}
\footnotetext[1]{Equal contribution.}
\footnotetext[2]{Corresponding author.}
\renewcommand*{\thefootnote}{\arabic{footnote}}
\maketitle

\begin{abstract}
 Supervised Pix2Pix and unsupervised Cycle-consistency are two modes that dominate the field of medical image-to-image translation. However, neither modes are ideal. The Pix2Pix mode has excellent performance. But it requires paired and well pixel-wise aligned images, which may not always be achievable due to respiratory motion or anatomy change between times that paired images are acquired. The Cycle-consistency mode is less stringent with training data and works well on unpaired or misaligned images. But its performance may not be optimal. In order to break the dilemma of the existing modes, we propose a new unsupervised mode called RegGAN for medical image-to-image translation. It is based on the theory of "loss-correction". In RegGAN, the misaligned target images are considered as noisy labels and the generator is trained with an additional registration network to fit the misaligned noise distribution adaptively. The goal is to search for the common optimal solution to both image-to-image translation and registration tasks. We incorporated RegGAN into a few state-of-the-art image-to-image translation methods and demonstrated that RegGAN could be easily combined with these methods to improve their performances. Such as a simple CycleGAN in our mode surpasses latest NICEGAN even though using less network parameters. Based on our results, RegGAN outperformed both Pix2Pix on aligned data and Cycle-consistency on misaligned or unpaired data. RegGAN is insensitive to noises which makes it a better choice for a wide range of scenarios, especially for medical image-to-image translation tasks in which well pixel-wise aligned data are not available. Code and data used in this study can be found at \url{https://github.com/Kid-Liet/Reg-GAN.}
\end{abstract}

\section{Introduction}

Generative adversarial networks (GANs)\cite{Cite01} is a framework that simultaneously trains a generator $G$ and a discriminator $D$ through an adversarial process. The generator is used to translate the distribution of source domain images $X$ to the distribution of target domain images $Y$. The discriminator is used to determine if the target domain images are likely from the generator or from the real data. 
\begin{equation}
\begin{aligned}
\min _{G}\max _{D}\mathcal{L}_{Adv}\left(G,D\right)=\mathbb{E}_{y}\left[log\left(D\left(y\right)\right)\right]+\mathbb{E}_{x}\left[log \left(1-D\left(G\left(x\right)\right)\right)\right]
\end{aligned}
\label{eq1}
\end{equation}
Supervised Pix2Pix\cite{Cite02} and unsupervised Cycle-consistency\cite{Cite03} are the two commonly used modes in GANs. Pix2Pix updates the generator ($G:X \rightarrow Y$) by minimizing pixel-level $L1$ loss between the source image $x$ and the target image $y$. Therefore, it requires well aligned paired images, where each pixel has a corresponding label. 
\begin{equation}
\begin{aligned}
\min _{G}\mathcal{L}_{L1}\left(G\right)=\mathbb{E}_{x,y}\left[\|y-G\left(x\right)\|_{1}\right]
\end{aligned}
\label{eq2}
\end{equation}
Well aligned paired images, however, are not always available in real-world scenarios. To address the challenges caused by misaligned images, Cycle-consistency was developed which was based on the assumption that the generator $G$ from the source domain $X$ to the target domain $Y$ ($G:X \rightarrow Y$) was the reverse of the generator $F$ from $Y$ to $X$ ($F:Y \rightarrow X$). Compared to the Pix2Pix mode, the Cycle-consistency mode works better on misaligned or unpaired images.
\begin{equation}
	\begin{aligned}
		\min _{G}\min _{F}\mathcal{L}_{Cyc}\left(G,F\right)=\mathbb{E}_{x}\left[\|F\left(G\left(x\right)\right)-x\|_{1}\right]+\mathbb{E}_{y}\left[\|G\left(F\left(y\right)\right)-y\|_{1}\right]
	\end{aligned}
	\label{eq3}
\end{equation}

The Cycle-consistency mode, however, has its limitations. In the field of medical image-to-image translation, it requires not only the style translation between image domains, but also the translation between specific pair of images. The optimal solution should be unique. For example, the translated images should maintain the anatomical features of the original images as much as possible. It is known that the Cycle-consistency mode may produce multiple solutions\cite{Cite04,Cite05}, meaning that the training process may be relatively perturbing and the results may not be accurate. The pix2pix mode is not ideal either. Even though it has a unique solution, it is difficult to satisfy the requirement asking for well aligned paired images. With misaligned images, the errors are propagated through the Pix2Pix mode which may result in unreasonable displacements on the final translated images.

As of today, there is no image-to-image translation mode that can outperform both the Pix2Pix mode on aligned data and the Cycle-consistency mode on misaligned or unpaired data. Inspired by\cite{Cite06,Cite07,Cite08,Cite09,Cite10}, we consider the misaligned target images as noisy labels, which means that the existing problem is regarded as supervised learning with noisy labels. So we introduce a new image-to-image translation mode called RegGAN. Figure ~\ref{fig1} provides a comparison of the three modes: Pix2Pix, Cycle-consistency and RegGAN. To facilitate reading, we summarize our contributions as follows.
\begin{itemize}
	\item 
	We demonstrate the feasibility of RegGAN from the theoretical perspective of "loss-correction". Specifically, we train the generator using an additional registration network to fit the misaligned noise distribution adaptively, with the goal to search for the common optimal solution for both image-to-image translation and registration tasks.
	\item 
	RegGAN eliminates the requirement for well aligned paired images and searches unique solution in training process. Based on our results, RegGAN outperformed both Pix2Pix on aligned data and Cycle-consistency on misaligned or unpaired data.
	\item
	RegGAN can be integrated into other methods without changing the original network architecture. Compared to Cycle-consistency with two generators and discriminators, RegGAN can provide better performance using less network parameters.
\end{itemize}
\begin{figure}[t]
	\centerline{\includegraphics[width=\columnwidth]{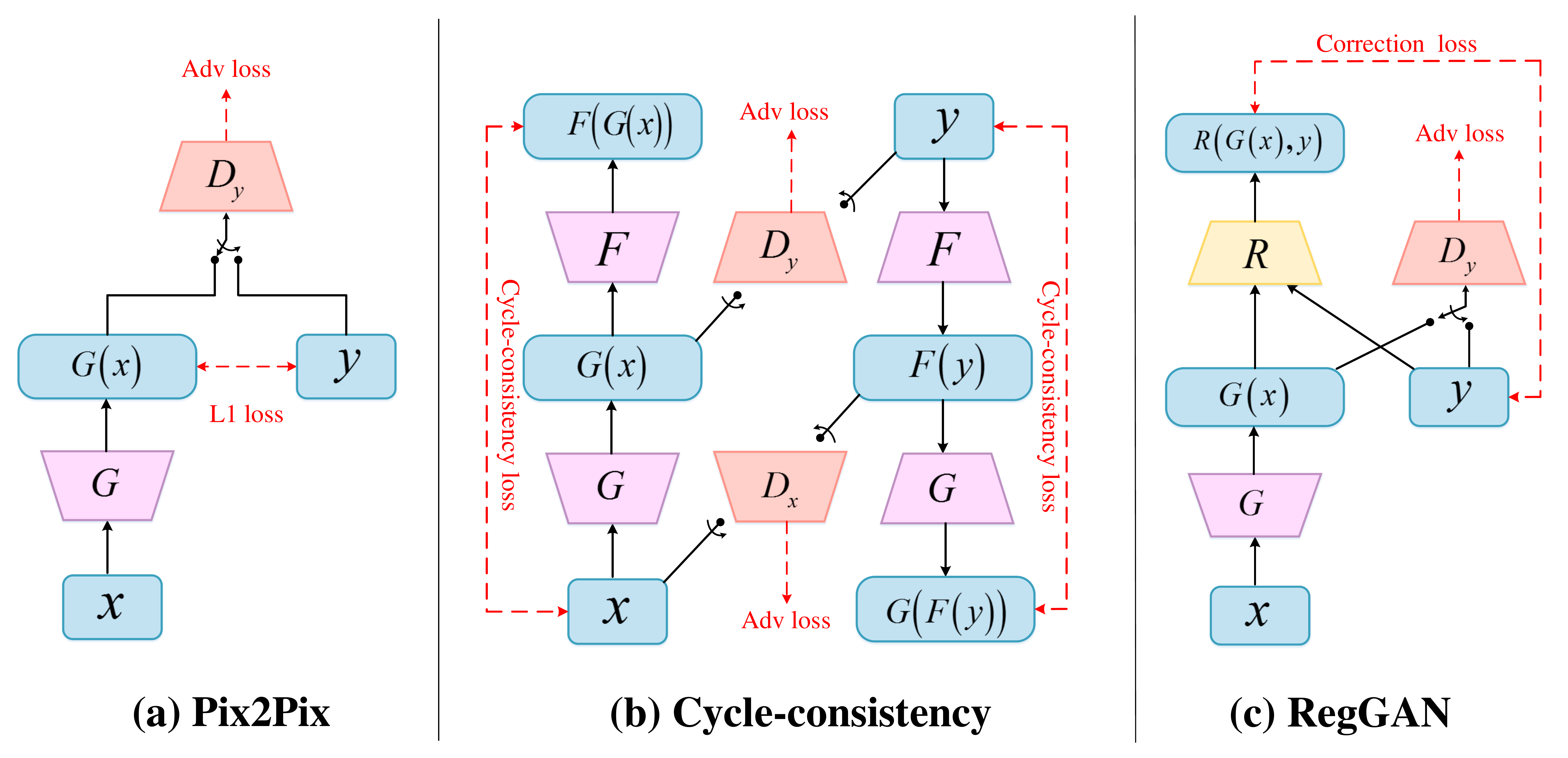}}
	\caption{Comparison among the modes of Pix2Pix, CycleGAN  and RegGAN.}
	\label{fig1}
\end{figure}

\section{Related Work}
\textbf{Image-to-image Translation:} Generative adversarial networks (GANs) have shown great potential in the field of image-to-image translation\cite{Cite11,Cite12,Cite13,Cite14,Cite15,Cite16}. It has been successfully implemented in medical image analysis like segmentation\cite{Cite17}, registration\cite{Cite18,Cite19} and dose calculation\cite{Cite20}. The existing modes, however, have their limitations. Specifically, the Pix2Pix mode\cite{Cite02} requires well aligned paired images which may not always be available. The Cycle-consistency mode can achieve unsupervised image-to-image translation. With a Cycle-consistency loss, it can be used for misaligned images. Based on Cycle-consistency, many methods\cite{Cite03,Cite21,Cite22,Cite23,Cite24,Cite25,Cite26,Cite27,Cite28,Cite29,Cite30} have been developed including CycleGAN\cite{Cite03} and its variants such as MUNIT\cite{Cite31} and UNIT\cite{Cite32} in which both image content and style information are used to decouple and reconstruct the image-to-image translation task; U-gat-it\cite{Cite33} with a self-attention mechanism added; and NICEGAN\cite{Cite34} proposed to reuse the discriminator for encoding. The main limitation of Cycle-consistency is that it may produce multiple solutions and therefore is sensitive to perturbation, making it difficult to meet the high accuracy requirements of medical image-to-image translation tasks.

\textbf{Learning From Noisy Labels:} Neural network anti-noise training has made great progress. Current research are mainly focused on: estimating the noise transition matrix\cite{Cite07,Cite35,Cite36,Cite37,Cite38,Cite39,Cite40}, designing a robust loss function\cite{Cite41,Cite42,Cite43,Cite44}, correcting the noise label\cite{Cite45,Cite46,Cite47,Cite48,Cite49,Cite50}, sampling importance weighting\cite{Cite51,Cite52,Cite53,Cite54,Cite55} and meta-learning\cite{Cite56,Cite57,Cite58,Cite59}. Our work is in the category of estimating the noise transition matrix. Compared to conventional noise transition estimation, we mitigate the issue and simplify the task by acquiring prior knowledge of noise distribution.

\textbf{Deformable Registration:} Traditional image registration methods have gained widespread acceptance, such as Demons\cite{Cite64}, B-spline\cite{Cite65} and elastic deformation model\cite{Cite66}.
One of the most popular deep learning methods is Voxelmorph\cite{Cite62}. In this study, a CNN model was trained to predict the deformable vector filed (DVF). The time-consuming gradient derivation process was thus skipped to improve the calculation efficiency. Affine registration and a vector momentum-parameterized stationary velocity field (vSVF)\cite{Cite67} was implemented to get better transformation regulation. Fast Symmetric method\cite{Cite68} used symmetric maximum similarity. Deep flash\cite{Cite69} outperformed other models in terms of training and calculation time.

Closest to our work, Arar.M et al\cite{Cite60} introduced a multi-modal registration method for natural images based on geometry preserving. But their work focused only on registration and did not demonstrate results of image-to-image translation or discuss the relationship between registration and image-to-image translation. The key insight of our work is that we demonstrated the feasibility of using registration to significantly improve the performance of image-to-image translation because the noise could be eliminated adaptively during the joint training process. What we propose in the paper is a completely new mode for medical image-to-image translation.

\section{Methodology}
\subsection{Theoretical Motivation}
If we consider misaligned target images as noisy labels, the training for image-to-image translation becomes a supervised learning process with noisy labels. Given a training dataset $\left\{(x_{n},\widetilde{y}_{n}) \right\}_{n=1}^{N}$ with $N$ noisy labels in which $x_{n}$, $\widetilde{y}_{n}$ are images from two modalities and assume $y_{n}$ is the correct label for $x_{n}$, but it is unknown in real-world scenarios. Our goal is to train a generator using the dataset  $\left\{(x_{n},\widetilde{y}_{n}) \right\}_{n=1}^{N}$ with noisy labels  and achieve the performance equivalent to trained on clean dataset $\left\{(x_{n},y_{n}) \right\}_{n=1}^{N}$ as much as possible. Direct optimization based on  Equations ~\ref{eq4} usually does not work and can lead to bad results because the generator cannot squeeze out the influence of noise.
\begin{equation}
\begin{aligned}
\hat{G}=\mathop{\arg\min}_{G} \frac{1}{N}\sum_{n=1}^N \mathcal{L}\left(G\left(x_{n}\right),\widetilde{y}_{n}\right)
\end{aligned}
\label{eq4}
\end{equation}
To address the noise issue, we propose a solution based on "loss-correction"\cite{Cite07} shown in  Equations ~\ref{eq5}. Our solution corrects the output of the generator $G(x_{n})$ by modeling a noise transition $\phi$ to match the noise distribution. Previously, Patrini et al\cite{Cite07} proved mathematically that the model trained with the noisy labels could be equivalent to the model trained with the clean labels, if the noise transition $\phi$ matches the noise distribution.
\begin{equation}
\begin{aligned}
\hat{G}=\mathop{\arg\min}_{G} \frac{1}{N}\sum_{n=1}^N \mathcal{L}\left(\phi \circ  G\left(x_{n}\right),\widetilde{y}_{n}\right)
\end{aligned}
\label{eq5}
\end{equation}

To achieve this, Goldberger et al\cite{Cite36} proposed to view the correct label as a latent random variable and explicitly model the label noise as a part of the network architecture, denoted by $R$. Then, Equations ~\ref{eq5} can be rewritten in the form of log-likelihood, which is used as the loss function for neural network training.
\begin{equation}
\begin{aligned}
\mathcal{L}\left(G,R\right)&=-\sum_{n=1}^N log\left(p\left(\widetilde{y}_{n}|y_{n};R\right)p\left({y}_{n}|x_{n};G\right)\right)\\
&=-\sum_{n=1}^N log\left(p\left(\widetilde{y}_{n}|x_{n};G,R\right)\right)
\end{aligned}
\label{eq6}
\end{equation}

\subsection{RegGAN}
Compared to existing methods that use expectation-maximum\cite{Cite07,Cite36}, fully connected layers\cite{Cite35}, anchor point estimate\cite{Cite37} and Drichlet-distribution\cite{Cite38} to solve Equations ~\ref{eq6}. In our problem, the type of noise distribution is clearer, it can be expressed as displacement error: $\widetilde{y}=y \circ T$. Here $T$ is expressed as a random deformation field, which produces random displacement for each pixel. So we adopt a registration network $R$ after the generator $G$ as label noise model to correct the results. The Correction loss is shown Equations ~\ref{eq7}: 
\begin{equation}
\begin{aligned}
\min _{G,R}\mathcal{L}_{Corr}\left(G,R\right)=\mathbb{E}_{x,\widetilde{y}}\left[\|\widetilde{y}-G\left(x\right)\circ R\left(G\left(x\right),\widetilde{y}\right)\|_{1}\right]
\end{aligned}
\label{eq7}
\end{equation}
where,  $R\left(G\left(x\right),\widetilde{y}\right)$ is the deformation field and $\circ$ represents the resamples operation. The registration network is based on U-Net\cite{Cite61}. A smoothness loss\cite{Cite62} is defined in Equations ~\ref{eq8} to evaluate the smoothness of the deformation field and minimize the gradient of the deformation field.
\begin{equation}
\begin{aligned}
\min _{R}\mathcal{L}_{Smooth}\left(R\right)=\mathbb{E}_{x,\widetilde{y}}\left[\|\nabla R\left(G\left(x\right),\widetilde{y}\right)\|^{2}\right]
\end{aligned}
\label{eq8}
\end{equation}
Finally, we add the Aversarial loss between the generator and the discriminator (Equations ~\ref{eq1}), and the total loss is expressed in Equations ~\ref{eq9}.
\begin{equation}
\begin{aligned}
\min _{G,R}\max _{D}\mathcal{L}_{Total}\left(G,R,D\right)=\mathcal{L}_{Corr}+\mathcal{L}_{Smooth}+\mathcal{L}_{Adv}
\end{aligned}
\label{eq9}
\end{equation}

\section{Experiments}
Performance evaluation of RegGAN was conducted through three investigations to 1) demonstrate the feasibility and superiority of the RegGAN mode in various methods, and 2) assess RegGAN’s sensitivity to noise, and 3) explore the availability of the RegGAN on unpaired data.
\subsection{Dataset}
The open-access dataset (BraTS 2018\cite{Cite63}) was used to evaluate the proposed RegGAN mode. The training dataset and testing dataset contained 8457 and 979 pairs of T1 and T2 MR images, respectively.
BraTS 2018 was selected because the original images were paired and well aligned. We created misaligned images by randomly adding different levels of rotation, translation and rescaling to the original images. And we randomly sample one image from T1 and the other one from T2 when training on unpaired images. The availability of well aligned paired images, misaligned paired images, and unpaired images allow us to evaluate the performances of all three modes (Pix2Pix, Cycle-consistency and RegGAN).
\subsection{Performances in Different Methods}
The primary motivation of introducing RegGAN was to address challenges caused by misaligned data. Therefore, in this section, misaligned data were used in model training to demonstrate the feasibility and superiority of RegGAN. We selected the most popular CycleGAN\cite{Cite03} and its variants MUNIT\cite{Cite31}, UNIT\cite{Cite32}, and NICEGAN\cite{Cite34} as the methods for evaluation and compared the following four modes for each method.

\begin{itemize}
	\item 
	\textbf{C(Cycle-consistency):} The most primitive mode of all methods, with Cycle-consistency loss (Equations ~\ref{eq3}) as the main constraint. Two generators and two discriminators are required in this mode.
	\item 
	\textbf{C+R (Cycle-consistency + Registration):} The RegGAN mode is combined with the mode \textbf{C}. Registration network ($R$) and Correction loss (Equations ~\ref{eq7})  are added to the constraints.
	\item
	\textbf{NC(Non Cycle-consistency):} 
	Only Adversarial loss (Equations ~\ref{eq1}) is used for updating. Compared to the mode \textbf{C}, Cycle-consistency loss is removed. Only one generator and one discriminator are required in this mode.
	\item
	\textbf{NC+R(Non Cycle-consistency + Registration):} A registration network ($R$) and Correction loss (Equations ~\ref{eq7}) are added to the mode \textbf{NC}. It is the proposed RegGAN mode.
\end{itemize}

\begin{table}[h]
	\caption{Comparison of CycleGAN, MUNIT, UNIT and NICEGAN using four training modes(C, C+R, NC and NC+R).}
	\setlength{\tabcolsep}{2mm}
	\begin{center}
		\renewcommand\arraystretch{1.3}
		\begin{tabular}{c|cccccccccccccccc}
			\hline
			\hline
			\diagbox{Index}{Modes}{Methods}&&CycleGAN&MUNIT&UNIT&NICEGAN\\
			\cline{1-6}
			\multirow{4}{*}{NMAE $\downarrow$}& C & 0.089 &0.11 &0.087 & 0.082\\
			& C+R &\textcolor[rgb]{0,0,1}{(-0.012)}0.077 &\textcolor[rgb]{0,0,1}{(-0.022)}0.088  &\textcolor[rgb]{0,0,1}{(-0.013)}0.074 &\textcolor[rgb]{0,0,1}{(-0.011)}\textbf{0.071}\\
			& NC & 0.11 &0.10 &0.098 &0.089\\
			& NC+R &\textcolor[rgb]{0,0,1}{(-0.038)}\textbf{0.072} & \textcolor[rgb]{0,0,1}{(-0.021)}\textbf{0.079} &\textcolor[rgb]{0,0,1}{(-0.027)}\textbf{0.071}&\textcolor[rgb]{0,0,1}{(-0.019)}\textbf{0.070}\\
			\hline
			\multirow{4}{*}{PSNR $\uparrow$}& C & 23.5 &20.6  &24.6 &25.2 \\
			& C+R & \textcolor[rgb]{0,0,1}{(+0.3)}23.8 &\textcolor[rgb]{0,0,1}{(+2.1)}22.7  &\textcolor[rgb]{0,0,1}{(+0.7)}25.3 &\textcolor[rgb]{0,0,1}{(+0.9)}26.1\\
			& NC & 20.2 &21.5  &23.7 &23.5\\
			& NC+R & \textcolor[rgb]{0,0,1}{(+5.4)}\textbf{25.6} & \textcolor[rgb]{0,0,1}{(+2.3)}\textbf{23.8} &\textcolor[rgb]{0,0,1}{(+1.8)}\textbf{25.5} &\textcolor[rgb]{0,0,1}{(+2.8)}\textbf{26.3}\\
			\hline
			\multirow{4}{*}{SSIM $\uparrow$}& C &  0.83 & 0.80 & 0.84& 0.83\\
			& C+R & \textcolor[rgb]{0,0,1}{(+0.02)}0.85  &\textcolor[rgb]{0,0,1}{(+0.03)} 0.83 & \textcolor[rgb]{0,0,1}{(+0.02)}\textbf{0.86}&\textcolor[rgb]{0,0,1}{(+0.03)}\textbf{0.86}\\
			& NC &  0.79 & 0.81& 0.83 &0.84\\
			& NC+R & \textcolor[rgb]{0,0,1}{(+0.07)}\textbf{0.86} &\textcolor[rgb]{0,0,1}{(+0.04)}\textbf{0.85}  &\textcolor[rgb]{0,0,1}{(+0.03)} \textbf{0.86}&\textcolor[rgb]{0,0,1}{(+0.02)}\textbf{0.86}\\
			\hline

		\end{tabular}
	\end{center}
	\vspace{-3mm}
	\label{tab1}
\end{table}

To evaluate the performance of each method on misaligned data, we randomly added [-5, +5] degrees of angle rotation, [-5, +5] percent of translation, and [-5, +5] percent of rescaling to the original T1 and T2 images on the training dataset.

To ensure fair comparison, we used the same training strategy and hyperparameters for all methods and modes (see supplementary materials for details). The Normalized Mean Absolute Error (NMAE), Peak Signal to Noise Ratio (PSNR) and Structural Similarity (SSIM) were used as metrics to evaluate the performances of trained models based on the testing dataset. To avoid false high results of index, we excluded the image background from the calculation. Table ~\ref{tab1} summarized the results for all methods and modes under the current investigation.

Based on the results from the Table ~\ref{tab1}, we can reach several conclusions. First, adding the registration network (\textbf{+R}) significantly improves the performances of the methods. This is true for all methods in both \textbf{C} and \textbf{NC} modes. It clearly demonstrates that RegGAN can be incorporated in various methods or combined with different network architectures to improve the performances. Second, the \textbf{C} mode is in general better than the \textbf{NC} mode for most of methods. 
Adding the registration network (\textbf{+R}) improves the performance of the \textbf{NC} mode more than that of the \textbf{C} mode. 
In fact, our results show that the \textbf{NC+R} mode is even better than the \textbf{C+R} mode, implying that "Cycle-consistency loss" may play a negative role when it is combined with RegGAN. 
Compared with the commonly used \textbf{C} mode with two generators and two discriminators, RegGAN has fewer parameters but provides better performance. The simple CycleGAN method in the \textbf{NC+R} mode outperforms the current state-of-the-art method NICEGAN in the \textbf{C} mode by 0.01, 0.4, 0.03 for NMAE, PSNR and SSIM, respectively. The \textbf{NC+R} mode can also be used to improve the performance of NICEGAN. In fact, the performance of NICEGAN in the \textbf{NC+R} mode is the best among all combinations of the 4 methods and 4 modes. 

Figure ~\ref{fig2} shows representative results from various combinations of the 4 methods (CycleGAN, MUNIT, UNIT and NICEGAN) and 4 modes (\textbf{C}, \textbf{C+R}, \textbf{NC} and \textbf{NC+R}). For all aspects of the image (from the tumor areas and the details), the combinations that use the registration network (\textbf{+R}) always provide more realistic and accurate results than those that do not use the registration network (\textbf{+R}).

\begin{figure}[t]
	\centerline{\includegraphics[width=\columnwidth]{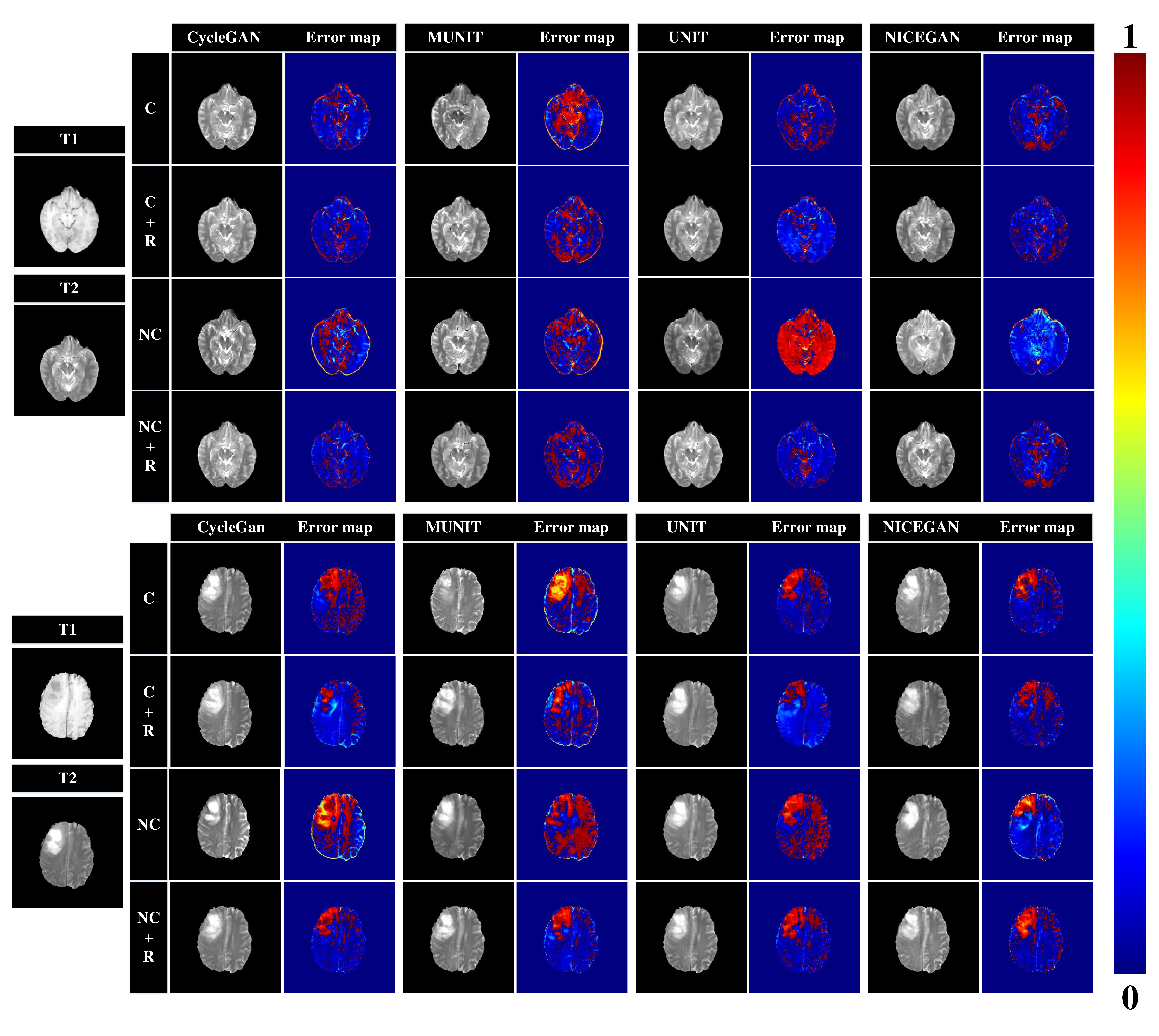}}
	\caption{The errors of different modes in different methods.}
	\label{fig2}
\end{figure}

\subsection{Performances in Different Noise Levels}
To evaluate the sensitivity of RegGAN to noise, we selected a simple network architecture.(CycleGAN) with the intention to minimize interference from other factors. The same network architecture was used for all three modes: CycleGAN(\textbf{C}), Pix2Pix and RegGAN. Seven levels of noise were used in the evaluation. Table ~\ref{tab2} lists the specific noise setting and range for each noise level. Noise.0 means the original dataset with no added noise. Noise.5 is the highest level of noise. At Noise.5, the data are likely from different patients. 
In addition, we also made non-affine noise(Noise.NA) settings. The non-affine noise is generated by spatially transforming T1 and T2 using elastic transformations on control points followed by Gaussian smoothing. Figure ~\ref{fig3} shows example images at different levels of introduced noise.

\begin{figure}[t]
	\centerline{\includegraphics[width=\columnwidth]{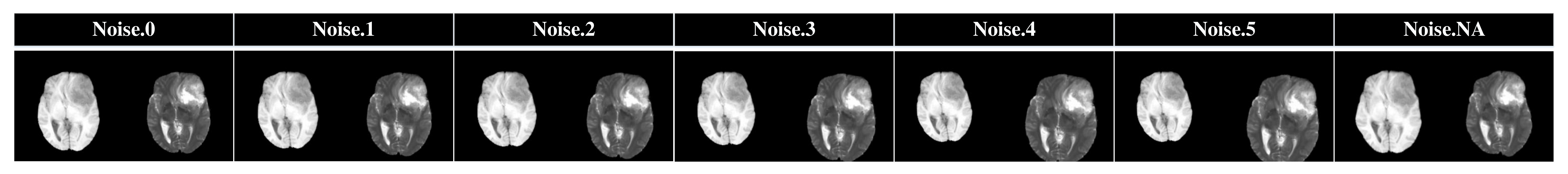}}
	\caption{Example images at seven different levels of introduced noise.}
	\label{fig3}
\end{figure}

\begin{table*}[t]
	\caption{Comparison of the NMAE, PSNR and SSIM for CycleGAN(\textbf{C}), Pix2Pix and RegGAN under 7 levels of noise.}
	\setlength{\tabcolsep}{1.4mm}
	\begin{center}
		\renewcommand\arraystretch{1.5}
		\begin{tabular}{c|c|cccccccccccccccc}
			\hline
			\hline
			\multicolumn{2}{c}{}&Noise.0&Noise.1&Noise.2&Noise.3&Noise.4&Noise.5&Noise.NA\\
			\cline{1-9}
			\multirow{3}{*}{Setting}& Rotate & $0^{\circ}$ &$\pm{1}^{\circ}$ &$\pm{2}^{\circ}$ &$\pm{3}^{\circ}$&$\pm{4}^{\circ}$&$\pm{5}^{\circ}$ &\XSolidBrush\\
			& Translation & $0\%$ &$\pm{2}\%$ &$\pm{4}\%$ &$\pm{6}\%$&$\pm{8}\%$&$\pm{10}\%$&\XSolidBrush\\
			& Rescaling & $0\%$ &$\pm{2}\%$ &$\pm{4}\%$ &$\pm{6}\%$&$\pm{8}\%$&$\pm{10}\%$&\XSolidBrush\\
			\hline
			\hline
			\multirow{3}{*}{CycleGAN(\textbf{C})}& NMAE $\downarrow$ & 0.084 &0.095  &0.087 &0.094&0.087&0.110 &0.091\\
			& PSNR $\uparrow$ & 23.9 &22.5  &23.7 &23.3&23.9&23.7&23.5\\
			& SSIM $\uparrow$ &0.83 &0.83  &0.82 &0.81&0.82&0.79&0.83\\
			\hline
			\multirow{3}{*}{Pix2Pix}& NMAE $\downarrow$ & 0.075 &0.103  &0.139 &0.161&0.175&0.181 &0.086\\
			& PSNR $\uparrow$ & 25.6& 22.3 & 18.9&16.2&15.3&15.0&21.1\\
			& SSIM $\uparrow$ & 0.85 & 0.82 &0.78 &0.76&0.74&0.74&0.82\\
			\hline
			\multirow{3}{*}{RegGAN}& NMAE $\downarrow$ & \textbf{0.071} &\textbf{0.073}  &\textbf{0.071} &\textbf{0.072}&\textbf{0.072}&\textbf{0.072} &\textbf{0.071}\\
			& PSNR $\uparrow$ & \textbf{26} &\textbf{25.6}  &\textbf{25.9} &\textbf{25.7}&\textbf{25.4}&\textbf{25.2}&\textbf{25.9}\\
			& SSIM $\uparrow$ & \textbf{0.86} &\textbf{0.86} &\textbf{0.86} &\textbf{0.86} &\textbf{0.86}&\textbf{0.85}&\textbf{0.86}\\
			\hline
			
		\end{tabular}
	\end{center}
	\vspace{-3mm}
	\label{tab2}
\end{table*}

Table ~\ref{tab2} lists the quantitative evaluation metrics from 3 modes at 7 levels of noise. It is clear that RegGAN outperforms CycleGAN(\textbf{C}) under all noise levels. Figure ~\ref{fig4}\textbf{(a)} shows the test results from each epoch during the training process for both RegGAN and CycleGAN(\textbf{C}). Curves of different colors corresponds to different levels of noise. We notice that CycleGAN(\textbf{C}) is not very stable during the training process. The test results fluctuate significantly and cannot converge well. This may be caused by the fact that the solution of CycleGAN (\textbf{C}) is not unique. As a comparison, RegGAN is quite stable. Although the results from different levels of noise may vary at the beginning of training, all curves converge to a similar result after multiple epoches of training, indicating that RegGAN is more robust to noise compared to CycleGAN (\textbf{C}).

\begin{figure}[t]
	\centerline{\includegraphics[width=\columnwidth]{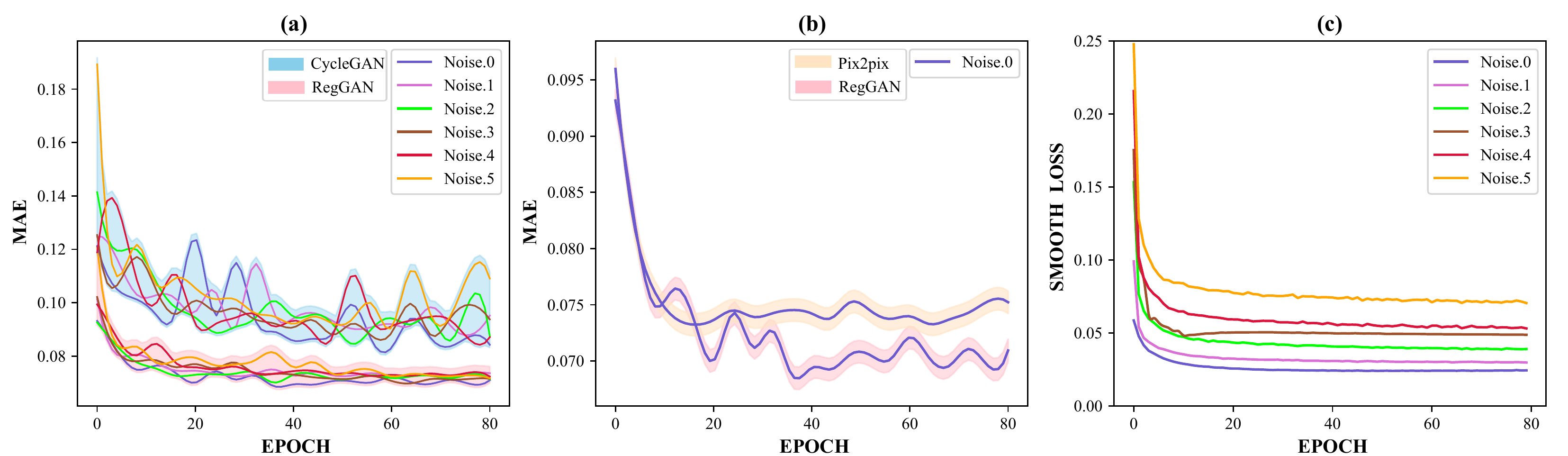}}
	\caption{Quantitative evaluation metrics at different epochs in the training process.
		\textbf{(a)} Comparison of CycleGAN and RegGAN at different levels of noise. \textbf{(b)} Comparison of Pix2Pix and RegGAN at Noise.0 (i.e., no noise). \textbf{(c)} RegGAN's Smoothness loss under different levels of noise.}
	\label{fig4}
\end{figure}

Based on Table ~\ref{tab2}, we notice that the performance of Pix2Pix deteriorates rapidly as the noise increases. This is as expected because Pix2Pix requires well aligned paired images. Surprisingly, the performances of RegGAN at all noise levels exceed those of Pix2Pix with no noise. Figure ~\ref{fig4}\textbf{(b)} shows the test results at each epoch of RegGAN and Pix2Pix under Noise.0 (i.e., no noise). Theoretically, the performances of RegGAN and Pix2Pix should be similar on perfectly aligned paired datasets because the registration network of RegGAN does not help and RegGAN is equivalent to Pix2Pix.
A possible explanation to our results is that in the medical field, the perfectly pixel-wise aligned dataset may not practically exist. Even for BraTS 2018\cite{Cite62} which is recognized as well aligned, it is still possible that there exists slight misalignment. As a result, adding the registration network is always likely to improve the performances in real-world scenarios. To verify our explanation, we plotted the Smoothness loss of RegGAN under different noise levels as shown in Figure ~\ref{fig4}\textbf{(c)}. Large Smoothness loss corresponds to large deformation field displacement. First, we notice that the Smoothness loss under Noise.0 never completely goes to 0, indicating the existence of misalignment and potential usefulness of the registration network. Second, the noise level and Smoothness loss show a step-like positive correlation, which means that RegGAN can adaptively handle the noise distribution, i.e., the registration network can determine the range of deformation according to the noise level. In addition, we can see that even under the setting of non-affine noise, the above conclusion still holds. Because what the registration network corrects is deformation noise.

\begin{figure}[t]
	\centerline{\includegraphics[width=\columnwidth]{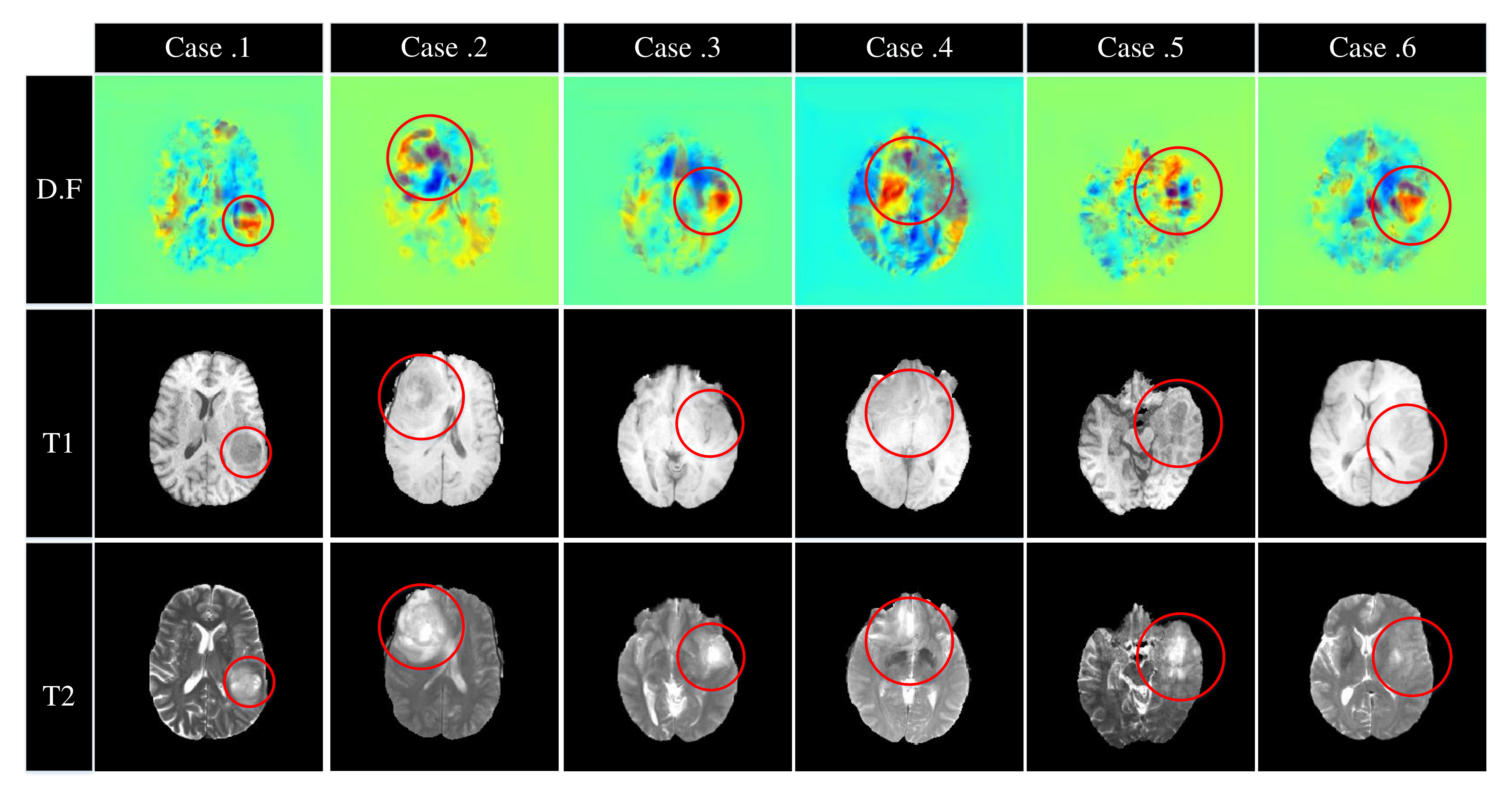}}
	\caption{The misalignment of orginal image pairs and corresponding  deformation fields.}
	\label{fig5}
\end{figure}

we also show some original image pairs and visualize the corresponding deformation fields output by registration network in Figure ~\ref{fig5}. Obviously, there is some misalignment between the original T1 and T2 images, and such misalignment is represented by the deformation fields (highlighted by red circle).

\subsection{Performances on Unpaired Dataset}
So far, our investigations are based on paired datasets. We also want to explore how RegGAN performs using unpaired datasets. In practice, this is not recommended because even different patients may have similarities in their body tissues of adjacent layers. For unpaired datasets, we can conduct rigid registration first in 3D space and then use RegGAN for training. Unpaired data can be treated as having larger scale noise. If the correction capability is strong enough, RegGAN can still work effectively. The comparison of the performances of three modes on the unpaired dataset is shown in Figure ~\ref{fig6}.

\begin{figure}[t]
	\begin{minipage}{0.6\linewidth}
		\centerline{\includegraphics[width=\columnwidth]{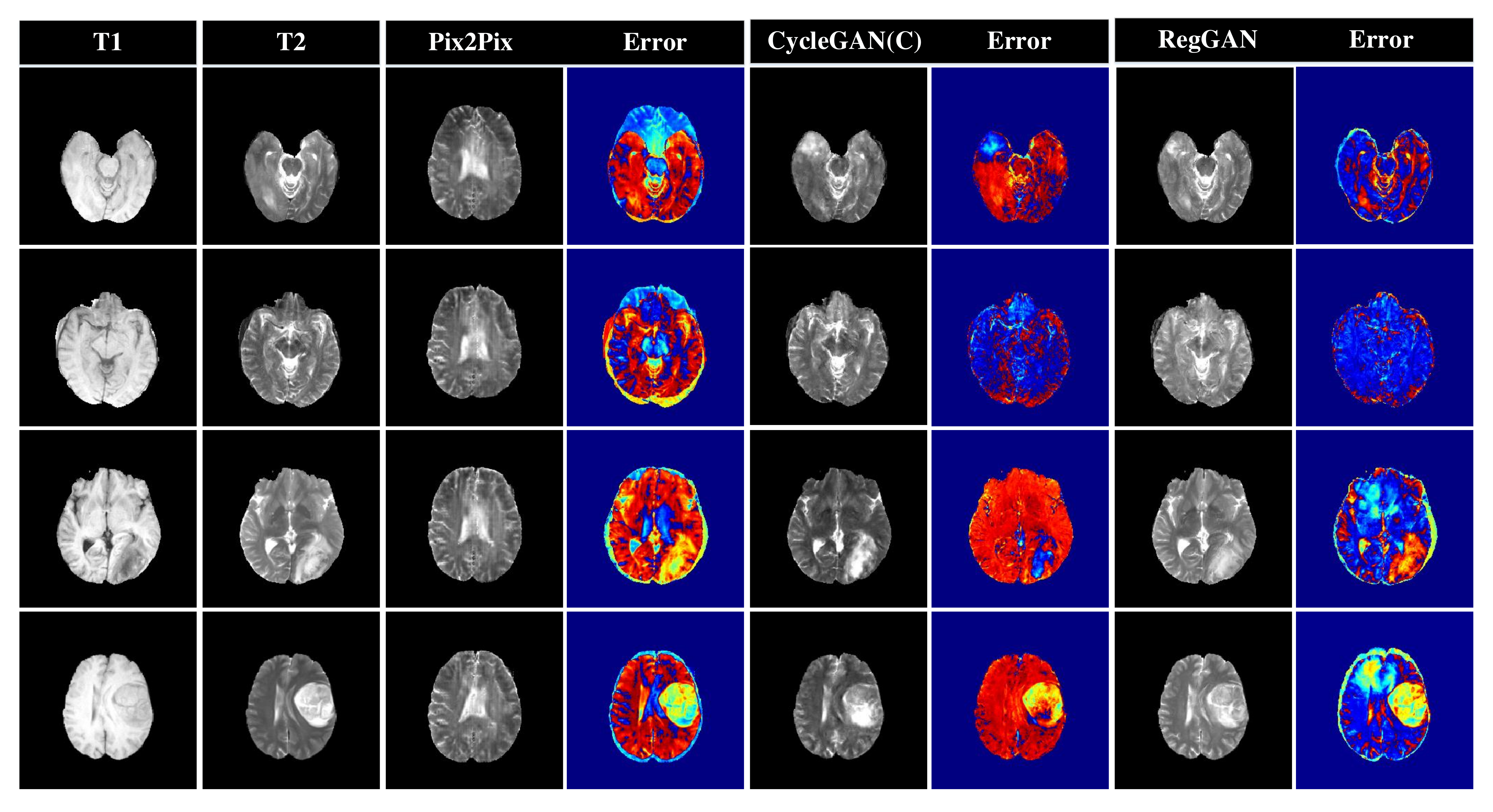}}
	\end{minipage}
	\hfill
	\begin{minipage}{0.4\linewidth}
		\setlength{\tabcolsep}{0.5mm}
		\renewcommand\arraystretch{2.1}
		\begin{tabular}{c|cccc}
			\hline
			\hline
			\diagbox{Mode}{Index}&NMAE$\downarrow$&PSNR$\uparrow$&SSIM$\uparrow$\\
			\hline
			Pix2Pix & 0.180& 15.5 &0.71 \\
			\hline
			CycleGAN(\textbf{C})& 0.094& 23.6 &\textbf{0.83} \\
			\hline
			RegGAN&\textbf{0.086}&\textbf{24.0}& \textbf{0.83}\\
			\hline
		\end{tabular}
	\end{minipage}
	\caption{Performance comparison of the three modes (CycleGAN(\textbf{C}), Pix2Pix and RegGAN) on unpaired dataset.}
	\label{fig6}
\end{figure}

With unpaired datasets, Pix2Pix no longer considers the characteristics of the input T1 images and thus has the worst performance. Due to the challenges in fitting the noise, the performance improvement from replacing CycleGAN(\textbf{C}) with RegGAN using unpaired datasets may not be as dramatic as that demonstrated using paired datasets, but RegGAN still has the best performance under unpaired conditions. 
In Figure ~\ref{fig7}, we show some examples of how RegGAN corrects noise on unpaired dataset. It can be seen that RegGAN will try its best to eliminate the influnce of noise through registration.

\begin{figure}[t]
	\centerline{\includegraphics[width=\columnwidth]{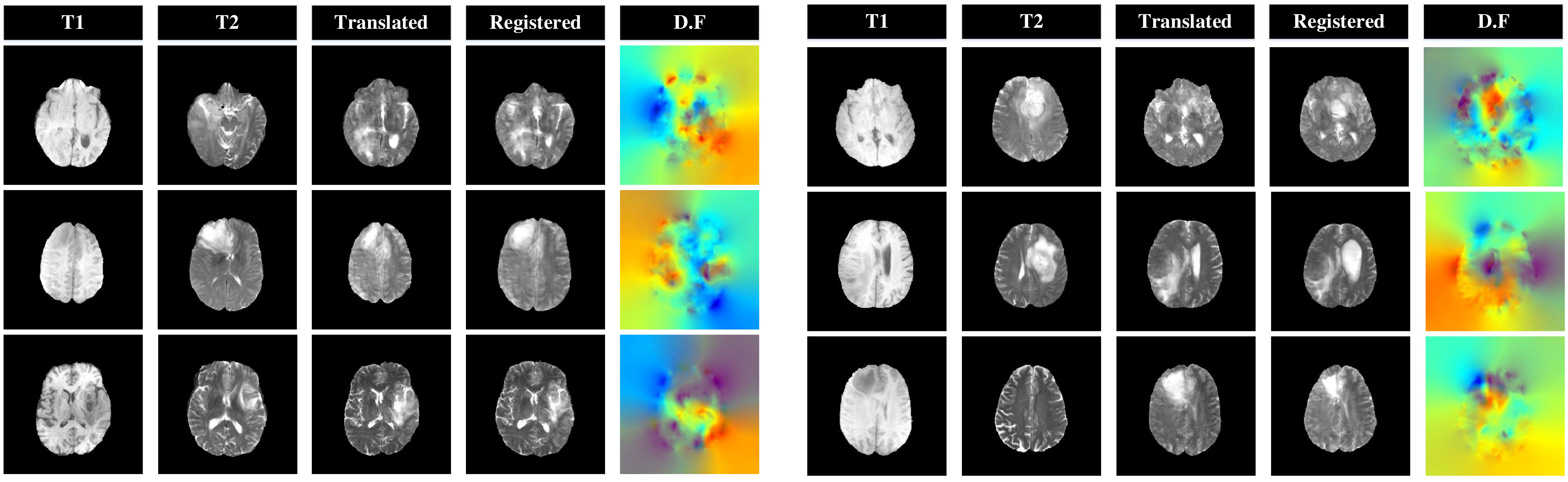}}
	\caption{Display of RegGAN's output on unpaired data. 
		\textbf{T1} and \textbf{T2} are unpaired images. The \textbf{Translated} represents the translation result of T1 to T2. \textbf{Registered} represents the registration result of the translated images. \textbf{D.F} represents deformation fields.}
	\label{fig7}
\end{figure}

Based on our results, it is reasonable to reach the conclusions below. In all circumstances, RegGAN demonstrates better performance compared to Pix2Pix and CycleGAN(\textbf{C}).
\begin{itemize}
	\item 
	 For paired and aligned conditions, RegGAN $\geq$ Pix2Pix $>$ CycleGAN(\textbf{C}).
	\item
	 For paired but misaligned conditions,  RegGAN $>$ CycleGAN(\textbf{C}) $>$Pix2Pix.
	\item
	 For unpaired conditions, RegGAN $>$ CycleGAN(\textbf{C}) $>$Pix2Pix.
\end{itemize}

\section*{Conclusion}
In this study, we introduced a new image-to-image translation mode RegGAN to the medical community that can break the dilemma of image-to-image translation task. Using a public BraTS 2018 dataset, we demonstrated the feasibility of RegGAN and its superior performance compared to Pix2Pix and Cycle-consistency. We validated that RegGAN can be incorporated into various existing methods to improve their performances. We also evaluated the sensitivity of RegGAN to noise. Our results confirmed that RegGAN could adapt well to various scenarios from no noise to large-scale noise. The superior performance of RegGAN makes it a better choice over Pix2Pix and Cycle-consistency whether datasets are aligned or not. However, this mode may not work well on natural images. The noise may cannot be considered simply as deformation errors due to the differences in natural images are much greater than those in medical images.

\section*{Broader Impact}
Image-to-image translation has been one of the main focuses in medical image analysis, as it aids in diagnosis and treatment. Previously, physicians had to use different medical imaging equipments if they wanted to get different image sequences of a patient, which was time-consuming and expensive. Pix2Pix mode is expected to solve this problem by its outstanding performance in image-to-image translation. In most of clinical scenarios, however, it is not practical to create such a large well aligned dataset for Pix2Pix mode. Cycle-consistence mode does not need well aligned dataset but can not meet the high-precision requirements of medical image analysis. Our work aims to provide a general image-to-image translation mode, which not only has no strict requirements on the dataset, but also can meet the clinical requirements in terms of image quality. In the future, we will attempt to obtain multi-modal dataset(eg MR-CT) for clinical verification. We foresee positive impacts if the mode is applied to diagnosis in radiology, treatment planning and research. 

\newpage
\bibliographystyle{refers}
\bibliography{Paper}

\begin{thebibliography}{69}
\providecommand{\natexlab}[1]{#1}
\providecommand{\url}[1]{#1}
\expandafter\ifx\csname urlstyle\endcsname\relax\relax\else
  \urlstyle{same}\fi
\providecommand{\href}[2]{\url{#2}}
\providecommand{\doi}[1]{\href{https://doi.org/#1}{#1}}

\bibitem[Goodfellow\ et~al.(2014)Goodfellow, Pouget-Abadie, Mirza, Xu,
  Warde-Farley, Ozair, Courville, and Bengio]{Cite01}
Goodfellow~I, Pouget-Abadie~J, Mirza~M, et~al.
\newblock Generative adversarial nets\allowbreak[C]//\allowbreak{}Ghahramani~Z,
  Welling~M, Cortes~C, et~al.
\newblock Advances in Neural Information Processing Systems.
\newblock Curran Associates, Inc., 2014.

\bibitem[Isola\ et~al.(2017)Isola, Zhu, Zhou, and Efros]{Cite02}
Isola~P, Zhu~J~Y, Zhou~T, et~al.
\newblock Image-to-image translation with conditional adversarial
  networks\allowbreak[C]//\allowbreak{}Proceedings of the IEEE conference on
  computer vision and pattern recognition.
\newblock 2017: 1125-1134.

\bibitem[Zhu\ et~al.(2017)Zhu, Park, Isola, and Efros]{Cite03}
Zhu~J~Y, Park~T, Isola~P, et~al.
\newblock Unpaired image-to-image translation using cycle-consistent
  adversarial networks\allowbreak[C]//\allowbreak{}Proceedings of the IEEE
  international conference on computer vision.
\newblock 2017: 2223-2232.

\bibitem[Sim\ et~al.(2019)Sim, Oh, Lim, and Ye]{Cite04}
Sim~B, Oh~G, Lim~S, et~al.
\newblock Optimal transport, cyclegan, and penalized ls for unsupervised
  learning in inverse problems\allowbreak[J].
\newblock 2019.

\bibitem[Moriakov\ et~al.(2020)Moriakov, Adler, and Teuwen]{Cite05}
Moriakov~N, Adler~J, Teuwen~J.
\newblock Kernel of cyclegan as a principal homogeneous
  space\allowbreak[C]//\allowbreak{}International Conference on Learning
  Representations.
\newblock 2020.

\bibitem[Huang\ et~al.(2020)Huang, Zhang, and Zhang]{Cite06}
Huang~L, Zhang~C, Zhang~H.
\newblock Self-adaptive training: beyond empirical risk
  minimization\allowbreak[J].
\newblock Advances in Neural Information Processing Systems, 2020, 33.

\bibitem[Patrini\ et~al.(2017)Patrini, Rozza, Krishna~Menon, Nock, and
  Qu]{Cite07}
Patrini~G, Rozza~A, Krishna~Menon~A, et~al.
\newblock Making deep neural networks robust to label noise: A loss correction
  approach\allowbreak[C]//\allowbreak{}Proceedings of the IEEE Conference on
  Computer Vision and Pattern Recognition.
\newblock 2017: 1944-1952.

\bibitem[Dgani\ et~al.(2018)Dgani, Greenspan, and Goldberger]{Cite08}
Dgani~Y, Greenspan~H, Goldberger~J.
\newblock Training a neural network based on unreliable human annotation of
  medical images\allowbreak[C]//\allowbreak{}2018 IEEE 15th International
  Symposium on Biomedical Imaging (ISBI 2018).
\newblock IEEE, 2018: 39-42.

\bibitem[Yao\ et~al.(2020)Yao, Liu, Han, Gong, Deng, Niu, and Sugiyama]{Cite09}
Yao~Y, Liu~T, Han~B, et~al.
\newblock Dual t: Reducing estimation error for transition matrix in
  label-noise learning\allowbreak[C/OL]//\allowbreak{}Larochelle~H, Ranzato~M,
  Hadsell~R, et~al.
\newblock Advances in Neural Information Processing Systems: volume~33.
\newblock Curran Associates, Inc., 2020: 7260-7271.
\newblock
  \url{https://proceedings.neurips.cc/paper/2020/file/512c5cad6c37edb98ae91c8a76c3a291-Paper.pdf}.

\bibitem[Zhang\ et~al.(2020{\natexlab{a}})Zhang, Tanno, Xu, Jin, Jacob,
  Cicarrelli, Barkhof, and Alexander]{Cite10}
Zhang~L, Tanno~R, Xu~M~C, et~al.
\newblock Disentangling human error from ground truth in segmentation of
  medical images\allowbreak[C/OL]//\allowbreak{}Larochelle~H, Ranzato~M,
  Hadsell~R, et~al.
\newblock Advances in Neural Information Processing Systems: volume~33.
\newblock Curran Associates, Inc., 2020{\natexlab{a}}: 15750-15762.
\newblock
  \url{https://proceedings.neurips.cc/paper/2020/file/b5d17ed2b502da15aa727af0d51508d6-Paper.pdf}.

\bibitem[wang\ et~al.(2020)wang, Yu, and van~de Weijer]{Cite11}
wang~y, Yu~L, van~de Weijer~J.
\newblock Deepi2i: Enabling deep hierarchical image-to-image translation by
  transferring from gans\allowbreak[C/OL]//\allowbreak{}Larochelle~H,
  Ranzato~M, Hadsell~R, et~al.
\newblock Advances in Neural Information Processing Systems: volume~33.
\newblock Curran Associates, Inc., 2020: 11803-11815.
\newblock
  \url{https://proceedings.neurips.cc/paper/2020/file/88855547570f7ff053fff7c54e5148cc-Paper.pdf}.

\bibitem[Zhang\ et~al.(2020{\natexlab{b}})Zhang, Zhang, Chen, Yuan, and
  Wen]{Cite12}
Zhang~P, Zhang~B, Chen~D, et~al.
\newblock Cross-domain correspondence learning for exemplar-based image
  translation\allowbreak[C]//\allowbreak{}Proceedings of the IEEE/CVF
  Conference on Computer Vision and Pattern Recognition.
\newblock 2020{\natexlab{b}}: 5143-5153.

\bibitem[Kim\ et~al.(2017)Kim, Cha, Kim, Lee, and Kim]{Cite13}
Kim~T, Cha~M, Kim~H, et~al.
\newblock Learning to discover cross-domain relations with generative
  adversarial networks\allowbreak[C]//\allowbreak{}International Conference on
  Machine Learning.
\newblock PMLR, 2017: 1857-1865.

\bibitem[Cao\ et~al.(2020)Cao and Wan]{Cite14}
Cao~Y, Wan~X.
\newblock Divgan: Towards diverse paraphrase generation via diversified
  generative adversarial network\allowbreak[C]//\allowbreak{}Proceedings of the
  2020 Conference on Empirical Methods in Natural Language Processing:
  Findings.
\newblock 2020: 2411-2421.

\bibitem[Choi\ et~al.(2018)Choi, Choi, Kim, Ha, Kim, and Choo]{Cite15}
Choi~Y, Choi~M, Kim~M, et~al.
\newblock Stargan: Unified generative adversarial networks for multi-domain
  image-to-image translation\allowbreak[C]//\allowbreak{}Proceedings of the
  IEEE conference on computer vision and pattern recognition.
\newblock 2018: 8789-8797.

\bibitem[Choi\ et~al.(2020)Choi, Uh, Yoo, and Ha]{Cite16}
Choi~Y, Uh~Y, Yoo~J, et~al.
\newblock Stargan v2: Diverse image synthesis for multiple
  domains\allowbreak[C]//\allowbreak{}Proceedings of the IEEE/CVF Conference on
  Computer Vision and Pattern Recognition.
\newblock 2020: 8188-8197.

\bibitem[Chen\ et~al.(2019)Chen, Dou, Chen, Qin, and Heng]{Cite17}
Chen~C, Dou~Q, Chen~H, et~al.
\newblock Synergistic image and feature adaptation: Towards cross-modality
  domain adaptation for medical image
  segmentation\allowbreak[C]//\allowbreak{}Proceedings of the AAAI Conference
  on Artificial Intelligence: volume~33.
\newblock 2019: 865-872.

\bibitem[Fan\ et~al.(2019)Fan, Cao, Wang, Yap, and Shen]{Cite18}
Fan~J, Cao~X, Wang~Q, et~al.
\newblock Adversarial learning for mono-or multi-modal
  registration\allowbreak[J].
\newblock Medical image analysis, 2019, 58:\penalty0 101545.

\bibitem[Qin\ et~al.(2019)Qin, Shi, Liao, Mansi, Rueckert, and Kamen]{Cite19}
Qin~C, Shi~B, Liao~R, et~al.
\newblock Unsupervised deformable registration for multi-modal images via
  disentangled representations\allowbreak[C]//\allowbreak{}International
  Conference on Information Processing in Medical Imaging.
\newblock Springer, 2019: 249-261.

\bibitem[Sahiner\ et~al.(2019)Sahiner, Pezeshk, Hadjiiski, Wang, Drukker, Cha,
  Summers, and Giger]{Cite20}
Sahiner~B, Pezeshk~A, Hadjiiski~L~M, et~al.
\newblock Deep learning in medical imaging and radiation therapy\allowbreak[J].
\newblock Medical physics, 2019, 46\penalty0 (1):\penalty0 e1-e36.

\bibitem[Hoffman\ et~al.(2018)Hoffman, Tzeng, Park, Zhu, Isola, Saenko, Efros,
  and Darrell]{Cite21}
Hoffman~J, Tzeng~E, Park~T, et~al.
\newblock Cycada: Cycle-consistent adversarial domain
  adaptation\allowbreak[C]//\allowbreak{}International conference on machine
  learning.
\newblock PMLR, 2018: 1989-1998.

\bibitem[Li\ et~al.(2018)Li, Huang, Ma, Liu, Zhang, and Jiang]{Cite22}
Li~M, Huang~H, Ma~L, et~al.
\newblock Unsupervised image-to-image translation with stacked cycle-consistent
  adversarial networks\allowbreak[C]//\allowbreak{}Proceedings of the European
  conference on computer vision (ECCV).
\newblock 2018: 184-199.

\bibitem[Deng\ et~al.(2018)Deng, Zheng, Ye, Kang, Yang, and Jiao]{Cite23}
Deng~W, Zheng~L, Ye~Q, et~al.
\newblock Image-image domain adaptation with preserved self-similarity and
  domain-dissimilarity for person
  re-identification\allowbreak[C]//\allowbreak{}Proceedings of the IEEE
  conference on computer vision and pattern recognition.
\newblock 2018: 994-1003.

\bibitem[Li\ et~al.(2019{\natexlab{a}})Li, Chen, Ding, Zhu, Lu, and
  Huang]{Cite24}
Li~J, Chen~E, Ding~Z, et~al.
\newblock Cycle-consistent conditional adversarial transfer
  networks\allowbreak[C]//\allowbreak{}Proceedings of the 27th ACM
  International Conference on Multimedia.
\newblock 2019{\natexlab{a}}: 747-755.

\bibitem[Schmidt\ et~al.(2019)Schmidt, Luccioni, Mukkavilli, Balasooriya,
  Sankaran, Chayes, and Bengio]{Cite25}
Schmidt~V, Luccioni~A, Mukkavilli~S~K, et~al.
\newblock Visualizing the consequences of climate change using cycle-consistent
  adversarial networks\allowbreak[J].
\newblock arXiv preprint arXiv:1905.03709, 2019.

\bibitem[Chen\ et~al.(2020{\natexlab{a}})Chen, Li, Luo, Huang, and
  Yang]{Cite26}
Chen~Z, Li~J, Luo~Y, et~al.
\newblock Canzsl: Cycle-consistent adversarial networks for zero-shot learning
  from natural language\allowbreak[C]//\allowbreak{}Proceedings of the IEEE/CVF
  Winter Conference on Applications of Computer Vision.
\newblock 2020{\natexlab{a}}: 874-883.

\bibitem[Ren\ et~al.(2020)Ren, Ziemann, Theiler, and Durieux]{Cite27}
Ren~C~X, Ziemann~A, Theiler~J, et~al.
\newblock Cycle-consistent adversarial networks for realistic pervasive change
  generation in remote sensing imagery\allowbreak[C]//\allowbreak{}2020 IEEE
  Southwest Symposium on Image Analysis and Interpretation (SSIAI).
\newblock IEEE, 2020: 42-45.

\bibitem[Liu\ et~al.(2020)Liu, Ding, Xiong, Jia, Huang, Zhuang, Xie, Liu, and
  Shi]{Cite28}
Liu~J, Ding~Y, Xiong~J, et~al.
\newblock Multi-cycle-consistent adversarial networks for ct image
  denoising\allowbreak[C]//\allowbreak{}2020 IEEE 17th International Symposium
  on Biomedical Imaging (ISBI).
\newblock IEEE, 2020: 614-618.

\bibitem[Zheng\ et~al.(2021)Zheng, Pan, and Wu]{Cite29}
Zheng~C, Pan~L, Wu~P.
\newblock Camu: Cycle-consistent adversarial mapping model for user alignment
  across social networks\allowbreak[J].
\newblock IEEE Transactions on Cybernetics, 2021.

\bibitem[Shah\ et~al.(2019)Shah, Chen, Rohrbach, and Parikh]{Cite30}
Shah~M, Chen~X, Rohrbach~M, et~al.
\newblock Cycle-consistency for robust visual question
  answering\allowbreak[C]//\allowbreak{}Proceedings of the IEEE/CVF Conference
  on Computer Vision and Pattern Recognition.
\newblock 2019: 6649-6658.

\bibitem[Huang\ et~al.(2018)Huang, Liu, Belongie, and Kautz]{Cite31}
Huang~X, Liu~M~Y, Belongie~S, et~al.
\newblock Multimodal unsupervised image-to-image
  translation\allowbreak[C]//\allowbreak{}Proceedings of the European
  conference on computer vision (ECCV).
\newblock 2018: 172-189.

\bibitem[Liu\ et~al.(2017)Liu, Breuel, and Kautz]{Cite32}
Liu~M~Y, Breuel~T, Kautz~J.
\newblock Unsupervised image-to-image translation
  networks\allowbreak[C/OL]//\allowbreak{}Guyon~I, Luxburg~U~V, Bengio~S,
  et~al.
\newblock Advances in Neural Information Processing Systems: volume~30.
\newblock Curran Associates, Inc., 2017.
\newblock
  \url{https://proceedings.neurips.cc/paper/2017/file/dc6a6489640ca02b0d42dabeb8e46bb7-Paper.pdf}.

\bibitem[Kim\ et~al.(2020)Kim, Kim, Kang, and Lee]{Cite33}
Kim~J, Kim~M, Kang~H, et~al.
\newblock U-gat-it: Unsupervised generative attentional networks with adaptive
  layer-instance normalization for image-to-image
  translation\allowbreak[C/OL]//\allowbreak{}International Conference on
  Learning Representations.
\newblock 2020.
\newblock \url{https://openreview.net/forum?id=BJlZ5ySKPH}.

\bibitem[Chen\ et~al.(2020{\natexlab{b}})Chen, Huang, Huang, Sun, and
  Fang]{Cite34}
Chen~R, Huang~W, Huang~B, et~al.
\newblock Reusing discriminators for encoding: Towards unsupervised
  image-to-image translation\allowbreak[C]//\allowbreak{}Proceedings of the
  IEEE/CVF Conference on Computer Vision and Pattern Recognition.
\newblock 2020{\natexlab{b}}: 8168-8177.

\bibitem[S.~Sukhbaatar(2015)]{Cite35}
S.~Sukhbaatar~M~P~L~B~R~F, J.~Bruna.
\newblock Training convolutional networks with noisy
  labels\allowbreak[C]//\allowbreak{}International Conference on Learning
  Representations.
\newblock 2015.

\bibitem[Goldberger\ et~al.(2016)Goldberger and Ben-Reuven]{Cite36}
Goldberger~J, Ben-Reuven~E.
\newblock Training deep neural-networks using a noise adaptation
  layer\allowbreak[J].
\newblock 2016.

\bibitem[Xia\ et~al.(2019)Xia, Liu, Wang, Han, Gong, Niu, and Sugiyama]{Cite37}
Xia~X, Liu~T, Wang~N, et~al.
\newblock Are anchor points really indispensable in label-noise
  learning?\allowbreak[C/OL]//\allowbreak{}NeurIPS.
\newblock 2019: 6835-6846.
\newblock
  \url{http://papers.nips.cc/paper/8908-are-anchor-points-really-indispensable-in-label-noise-learning}.

\bibitem[Yao\ et~al.(2019)Yao, Wu, Zhang, Tsang, and Sun]{Cite38}
Yao~J, Wu~H, Zhang~Y, et~al.
\newblock Safeguarded dynamic label regression for noisy
  supervision\allowbreak[C]//\allowbreak{}Proceedings of the AAAI Conference on
  Artificial Intelligence: volume~33.
\newblock 2019: 9103-9110.

\bibitem[Xiao\ et~al.(2015)Xiao, Xia, Yang, Huang, and Wang]{Cite39}
Xiao~T, Xia~T, Yang~Y, et~al.
\newblock Learning from massive noisy labeled data for image
  classification\allowbreak[C]//\allowbreak{}Proceedings of the IEEE conference
  on computer vision and pattern recognition.
\newblock 2015: 2691-2699.

\bibitem[Misra\ et~al.(2016)Misra, Lawrence~Zitnick, Mitchell, and
  Girshick]{Cite40}
Misra~I, Lawrence~Zitnick~C, Mitchell~M, et~al.
\newblock Seeing through the human reporting bias: Visual classifiers from
  noisy human-centric labels\allowbreak[C]//\allowbreak{}Proceedings of the
  IEEE Conference on Computer Vision and Pattern Recognition.
\newblock 2016: 2930-2939.

\bibitem[Tanno\ et~al.(2019)Tanno, Saeedi, Sankaranarayanan, Alexander, and
  Silberman]{Cite41}
Tanno~R, Saeedi~A, Sankaranarayanan~S, et~al.
\newblock Learning from noisy labels by regularized estimation of annotator
  confusion\allowbreak[C]//\allowbreak{}Proceedings of the IEEE/CVF Conference
  on Computer Vision and Pattern Recognition.
\newblock 2019: 11244-11253.

\bibitem[Rodrigues\ et~al.(2018)Rodrigues and Pereira]{Cite42}
Rodrigues~F, Pereira~F.
\newblock Deep learning from crowds\allowbreak[C]//\allowbreak{}Proceedings of
  the AAAI Conference on Artificial Intelligence: volume~32.
\newblock 2018.

\bibitem[Branson\ et~al.(2017)Branson, Van~Horn, and Perona]{Cite43}
Branson~S, Van~Horn~G, Perona~P.
\newblock Lean crowdsourcing: Combining humans and machines in an online
  system\allowbreak[C]//\allowbreak{}Proceedings of the IEEE Conference on
  Computer Vision and Pattern Recognition.
\newblock 2017: 7474-7483.

\bibitem[Izadinia\ et~al.(2015)Izadinia, Russell, Farhadi, Hoffman, and
  Hertzmann]{Cite44}
Izadinia~H, Russell~B~C, Farhadi~A, et~al.
\newblock Deep classifiers from image tags in the
  wild\allowbreak[C]//\allowbreak{}Proceedings of the 2015 Workshop on
  Community-Organized Multimodal Mining: Opportunities for Novel Solutions.
\newblock 2015: 13-18.

\bibitem[Jaehwan\ et~al.(2019)Jaehwan, Donggeun, and Hyo-Eun]{Cite45}
Jaehwan~L, Donggeun~Y, Hyo-Eun~K.
\newblock Photometric transformer networks and label adjustment for breast
  density prediction\allowbreak[C]//\allowbreak{}Proceedings of the IEEE/CVF
  International Conference on Computer Vision Workshops.
\newblock 2019: 0-0.

\bibitem[Veit\ et~al.(2017)Veit, Alldrin, Chechik, Krasin, Gupta, and
  Belongie]{Cite46}
Veit~A, Alldrin~N, Chechik~G, et~al.
\newblock Learning from noisy large-scale datasets with minimal
  supervision\allowbreak[C]//\allowbreak{}Proceedings of the IEEE conference on
  computer vision and pattern recognition.
\newblock 2017: 839-847.

\bibitem[Tanaka\ et~al.(2018)Tanaka, Ikami, Yamasaki, and Aizawa]{Cite47}
Tanaka~D, Ikami~D, Yamasaki~T, et~al.
\newblock Joint optimization framework for learning with noisy
  labels\allowbreak[C]//\allowbreak{}Proceedings of the IEEE Conference on
  Computer Vision and Pattern Recognition.
\newblock 2018: 5552-5560.

\bibitem[Zheng\ et~al.(2020)Zheng, Wu, Goswami, Goswami, Metaxas, and
  Chen]{Cite48}
Zheng~S, Wu~P, Goswami~A, et~al.
\newblock Error-bounded correction of noisy
  labels\allowbreak[C]//\allowbreak{}International Conference on Machine
  Learning.
\newblock PMLR, 2020: 11447-11457.

\bibitem[Han\ et~al.(2019)Han, Luo, and Wang]{Cite49}
Han~J, Luo~P, Wang~X.
\newblock Deep self-learning from noisy
  labels\allowbreak[C]//\allowbreak{}Proceedings of the IEEE/CVF International
  Conference on Computer Vision.
\newblock 2019: 5138-5147.

\bibitem[Durand\ et~al.(2019)Durand, Mehrasa, and Mori]{Cite50}
Durand~T, Mehrasa~N, Mori~G.
\newblock Learning a deep convnet for multi-label classification with partial
  labels\allowbreak[C]//\allowbreak{}Proceedings of the IEEE/CVF Conference on
  Computer Vision and Pattern Recognition.
\newblock 2019: 647-657.

\bibitem[Wu\ et~al.(2018)Wu, He, Sun, and Tan]{Cite51}
Wu~X, He~R, Sun~Z, et~al.
\newblock A light cnn for deep face representation with noisy
  labels\allowbreak[J].
\newblock IEEE Transactions on Information Forensics and Security, 2018,
  13\penalty0 (11):\penalty0 2884-2896.

\bibitem[Huang\ et~al.(2019)Huang, Qu, Jia, and Zhao]{Cite52}
Huang~J, Qu~L, Jia~R, et~al.
\newblock O2u-net: A simple noisy label detection approach for deep neural
  networks\allowbreak[C]//\allowbreak{}Proceedings of the IEEE/CVF
  International Conference on Computer Vision.
\newblock 2019: 3326-3334.

\bibitem[Li\ et~al.(2020)Li, Socher, and Hoi]{Cite53}
Li~J, Socher~R, Hoi~S~C.
\newblock Dividemix: Learning with noisy labels as semi-supervised
  learning\allowbreak[C/OL]//\allowbreak{}International Conference on Learning
  Representations.
\newblock 2020.
\newblock \url{https://openreview.net/forum?id=HJgExaVtwr}.

\bibitem[Yan\ et~al.(2016)Yan, Xu, Tsang, Long, and Yang]{Cite54}
Yan~Y, Xu~Z, Tsang~I, et~al.
\newblock Robust semi-supervised learning through label
  aggregation\allowbreak[C]//\allowbreak{}Proceedings of the AAAI Conference on
  Artificial Intelligence: volume~30.
\newblock 2016.

\bibitem[Jiang\ et~al.(2018)Jiang, Ma, Wang, Chen, and Liu]{Cite55}
Jiang~J, Ma~J, Wang~Z, et~al.
\newblock Hyperspectral image classification in the presence of noisy
  labels\allowbreak[J].
\newblock IEEE Transactions on Geoscience and Remote Sensing, 2018, 57\penalty0
  (2):\penalty0 851-865.

\bibitem[Li\ et~al.(2019{\natexlab{b}})Li, Wong, Zhao, and Kankanhalli]{Cite56}
Li~J, Wong~Y, Zhao~Q, et~al.
\newblock Learning to learn from noisy labeled
  data\allowbreak[C]//\allowbreak{}Proceedings of the IEEE/CVF Conference on
  Computer Vision and Pattern Recognition.
\newblock 2019{\natexlab{b}}: 5051-5059.

\bibitem[Li\ et~al.(2017)Li, Yang, Song, Cao, Luo, and Li]{Cite57}
Li~Y, Yang~J, Song~Y, et~al.
\newblock Learning from noisy labels with
  distillation\allowbreak[C]//\allowbreak{}Proceedings of the IEEE
  International Conference on Computer Vision.
\newblock 2017: 1910-1918.

\bibitem[Algan\ et~al.(2021)Algan and Ulusoy]{Cite58}
Algan~G, Ulusoy~I.
\newblock Meta soft label generation for noisy
  labels\allowbreak[C]//\allowbreak{}2020 25th International Conference on
  Pattern Recognition (ICPR).
\newblock IEEE, 2021: 7142-7148.

\bibitem[Garcia\ et~al.(2016)Garcia, de~Carvalho, and Lorena]{Cite59}
Garcia~L~P, de~Carvalho~A~C, Lorena~A~C.
\newblock Noise detection in the meta-learning level\allowbreak[J].
\newblock Neurocomputing, 2016, 176:\penalty0 14-25.

\bibitem[Thirion(1998)]{Cite64}
Thirion~J~P.
\newblock Image matching as a diffusion process: an analogy with maxwell's
  demons\allowbreak[J].
\newblock Medical image analysis, 1998, 2:\penalty0 243260.

\bibitem[Rueckert\ et~al.(1999)Rueckert, Sonoda, Hayes, Hill, Leach, and
  Hawkes]{Cite65}
Rueckert~D, Sonoda~L~I, Hayes~C, et~al.
\newblock Nonrigid registration using free-form deformations: application to
  breast mr images\allowbreak[J].
\newblock IEEE transactions on medical imaging, 1999, 18\penalty0 (8):\penalty0
  712-721.

\bibitem[Shen\ et~al.(2002)Shen and Davatzikos]{Cite66}
Shen~D, Davatzikos~C.
\newblock Hammer: hierarchical attribute matching mechanism for elastic
  registration\allowbreak[J].
\newblock IEEE transactions on medical imaging, 2002, 21\penalty0
  (11):\penalty0 1421-1439.

\bibitem[Balakrishnan\ et~al.(2019)Balakrishnan, Zhao, Sabuncu, Guttag, and
  Dalca]{Cite62}
Balakrishnan~G, Zhao~A, Sabuncu~M~R, et~al.
\newblock Voxelmorph: a learning framework for deformable medical image
  registration\allowbreak[J].
\newblock IEEE transactions on medical imaging, 2019, 38\penalty0 (8):\penalty0
  1788-1800.

\bibitem[Shen\ et~al.(2019)Shen, Han, Xu, and Niethammer]{Cite67}
Shen~Z, Han~X, Xu~Z, et~al.
\newblock Networks for joint affine and non-parametric image
  registration\allowbreak[C]//\allowbreak{}Proceedings of the IEEE Conference
  on Computer Vision and Pattern Recognition.
\newblock 2019: 4224-4233.

\bibitem[Mok\ et~al.(2020)Mok and Chung]{Cite68}
Mok~T~C, Chung~A.
\newblock Fast symmetric diffeomorphic image registration with convolutional
  neural networks\allowbreak[C]//\allowbreak{}Proceedings of the IEEE/CVF
  Conference on Computer Vision and Pattern Recognition.
\newblock 2020: 4644-4653.

\bibitem[Wang\ et~al.(2020)Wang and Zhang]{Cite69}
Wang~J, Zhang~M.
\newblock Deepflash: An efficient network for learning-based medical image
  registration\allowbreak[C]//\allowbreak{}Proceedings of the IEEE/CVF
  Conference on Computer Vision and Pattern Recognition.
\newblock 2020: 4444-4452.

\bibitem[Arar\ et~al.(2020)Arar, Ginger, Danon, Bermano, and Cohen-Or]{Cite60}
Arar~M, Ginger~Y, Danon~D, et~al.
\newblock Unsupervised multi-modal image registration via geometry preserving
  image-to-image translation\allowbreak[C]//\allowbreak{}Proceedings of the
  IEEE/CVF conference on computer vision and pattern recognition.
\newblock 2020: 13410-13419.

\bibitem[Ronneberger\ et~al.(2015)Ronneberger, Fischer, and Brox]{Cite61}
Ronneberger~O, Fischer~P, Brox~T.
\newblock U-net: Convolutional networks for biomedical image
  segmentation\allowbreak[C]//\allowbreak{}International Conference on Medical
  image computing and computer-assisted intervention.
\newblock Springer, 2015: 234-241.

\bibitem[Menze\ et~al.(2014)Menze, Jakab, Bauer, Kalpathy-Cramer, Farahani,
  Kirby, Burren, Porz, Slotboom, Wiest, et~al.]{Cite63}
Menze~B~H, Jakab~A, Bauer~S, et~al.
\newblock The multimodal brain tumor image segmentation benchmark
  (brats)\allowbreak[J].
\newblock IEEE transactions on medical imaging, 2014, 34\penalty0
  (10):\penalty0 1993-2024.

\end{thebibliography}




\appendix

\section{Training Details}

All the experiments were implemented in Pytorch software on 64-bit Ubuntu Linux system with 96GB RAM and 24GB Nvidia Titan RTX GPU. All the images were normalized to [-1, 1] and resampled to 256$\times$256.
We train all methods using the Adam optimizer with the learning rate of 1e-4 and ($\beta_{1}$ ,$\beta_{2}$) = (0.5, 0.999). The batch size was set to 1 with weight decay 1e-4. The training process includes totally 80 epochs and over 640K iterations. We also set different weights  for different loss functions, as shown in Table ~\ref{tab3}.

\begin{table*}[h]
	\caption{The different weights for different loss functions.}
	\setlength{\tabcolsep}{2.5mm}
	\begin{center}
		\renewcommand\arraystretch{2}
		\begin{tabular}{cccccccccccccccccc}
			\hline
			\hline
			\diagbox{Weight}{Loss}&$\mathcal{L}_{adv}$&$\mathcal{L}_{L1}$&$\mathcal{L}_{cyc}$&$\mathcal{L}_{corr}$&$\mathcal{L}_{smooth}$\\
			\hline
			$\lambda$&1&100 &10&20&10 \\
			\hline
			
		\end{tabular}
	\end{center}
	\label{tab3}
\end{table*}

\textbf{CycleGAN} uses two downsampling convolution blocks, nine residual blocks, two up-sampling deconvolution blocks and four discriminator layers. Codes are on \url{https://github.com/junyanz/pytorchCycleGAN-and-pix2pix}.

\textbf{MUNIT} assumes that the image representation can be decomposed into a content code and a style code. It splits a generator into two encoders and a decoder. Codes are on \url{ https://github.com/NVlabs/MUNIT}.

\textbf{UNIT} assumes that the different modality of an image should share the same latent coding space. It shares the weight of the high-level layer stage of the encoder and decoder. Codes are on \url{https://github.com/mingyuliutw/UNIT}.

\textbf{NICEGAN} reuses discriminators for encoding specifically for unsupervised image-to-image translation. By such a reusing, a more compact and more effective architecture is derived. Codes are on \url{https://github.com/alpc91/NICE-GAN-pytorch}.

\section{Theoretical Analysis}
In this section, we will analyze the generalization form for the proposed mode.

\textbf{Definition:}
In the image-to-image translation task, the goal is to optimize and get a generator $G: \begin{aligned}\mathop{\arg\min}_{G} \mathbb{E}_{x~y} \mathcal{L}(G(x),y)\end{aligned}$. Where $(x,y)$ are paired and aligned multi-modal images, ${\forall}x, y\in \mathbb{R}^{H \times W}$ image space. $\mathcal{L}$ is the loss function. But in practice, we can only get noisy labels $(x,\widetilde{y})$, and the correct label $y$ is unknown. The relationship between $\widetilde{y}$ and $y$ can be expressed as displacement error:  $\widetilde{y}=y \circ T$. Here $T$ is expressed as a random deformation field, which produces displacement for each pixel, ${\forall}T\in \mathbb{R}^{2\times H \times W}$. If we can build an unbiased estimator of model $R$, it can fit the noise distribution $T$ well, such that under expected label noise the corrected loss equals the original one computed on clean data.

\newtheorem{theorem}{Theorem}

\begin{theorem}
	Suppose the deformation field $T$ is smooth enough for: $T \circ T^{-1}\equiv I$. $I$ represents identical transformation. Then, the minimizer of the corrected loss under the noisy distribution is the same as the minimizer of the original loss under the clean distribution:
	
	\begin{equation}
		\begin{aligned}
			\mathop{\arg\min}_{G} \mathbb{E}_{x~\widetilde{y}} \mathcal{L}(G(x) \circ T,\widetilde{y})=\mathop{\arg\min}_{G} \mathbb{E}_{x~y} \mathcal{L}(G(x),y)
		\end{aligned}
		\label{eq10}
	\end{equation}
\end{theorem}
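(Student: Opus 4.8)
The plan is to reduce the identity \eqref{eq10} to a single structural fact — that the $L_1$ reconstruction loss is invariant when one applies the \emph{same} spatial warp to both of its arguments — and then to push that invariance through the expectation. First I would substitute the noise model $\widetilde{y} = y \circ T$ (stated in the Definition) into the left-hand side of \eqref{eq10}, rewriting the corrected objective as $\mathbb{E}_{x,y,T}\,\mathcal{L}(G(x)\circ T,\; y\circ T)$, where the expectation is now taken over the clean pair $(x,y)$ together with the random deformation field $T$. The theorem then follows if I can show $\mathcal{L}(G(x)\circ T,\; y\circ T) = \mathcal{L}(G(x), y)$ for (almost) every realization of $T$.

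The core step is this pointwise invariance. Writing $a = G(x)$ and $b = y$ as images on a spatial domain $\Omega \subset \mathbb{R}^{H\times W}$, the resampling operator acts by $(a\circ T)(p) = a(T(p))$, and for the $\ell_1$ loss $\mathcal{L}(a,b) = \sum_{p}\,|a(p)-b(p)|$ (viewed in the continuum as $\int_\Omega |a-b|$) a change of variables $q = T(p)$ gives $\int_\Omega |a(T(p)) - b(T(p))|\,dp = \int_{T(\Omega)} |a(q)-b(q)|\,|\det J_{T^{-1}}(q)|\,dq$. The hypothesis $T\circ T^{-1}\equiv I$ says $T$ is a bijection of $\Omega$ onto itself, and "$T$ smooth enough" is precisely what forces the Jacobian factor to be $1$ and forbids mass from entering or leaving through $\partial\Omega$; under these conditions the right-hand integral is exactly $\mathcal{L}(a,b)$, so $\mathcal{L}(a\circ T, b\circ T) = \mathcal{L}(a,b)$.

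Given the pointwise identity, I would take expectations over $(x,y,T)$: the corrected objective collapses to $\mathbb{E}_{x,y}\,\mathcal{L}(G(x),y)$, identically as a functional of $G$. Since the two functionals of $G$ coincide everywhere, their sets of minimizers coincide, which is exactly \eqref{eq10}. To close the loop with the actual algorithm, I would add the remark (already anticipated in the Definition) that when the learned registration network $R$ is an unbiased estimator of the true field, so that $R(G(x),\widetilde{y})$ recovers $T$, the realized training loss $\mathcal{L}_{Corr}$ of \eqref{eq7} equals the corrected objective analyzed above, and hence minimizing \eqref{eq7} attains the clean-data optimum.

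The main obstacle is the second step: the resampling operator used in practice is not an exact measure-preserving bijection — interpolation blurs the image, $\det J_T$ is not identically $1$ for a generic deformation, and content can cross the field-of-view boundary. The roles of the "$T\circ T^{-1}\equiv I$" and "smooth enough" assumptions are exactly to neutralize these effects (a boundary-preserving diffeomorphism with Jacobian near one), so a rigorous treatment amounts to showing the change-of-variables discrepancy vanishes in this idealized regime and, if one wants a quantitative statement, bounding the residual error contributed by interpolation and by deviation of the Jacobian from unity.
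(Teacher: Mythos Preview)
Your argument is correct but takes a different route from the paper. Both proofs begin by substituting $\widetilde{y}=y\circ T$, reducing the left-hand side to $\mathbb{E}\,\mathcal{L}(G(x)\circ T,\;y\circ T)$. From there you invoke a change-of-variables in the loss integral, arguing that $\int|a(T(p))-b(T(p))|\,dp=\int|a-b|$ once $T$ is a boundary-preserving diffeomorphism with unit Jacobian; this yields the pointwise identity $\mathcal{L}(a\circ T,b\circ T)=\mathcal{L}(a,b)$ and the theorem follows immediately. The paper instead manipulates the expression algebraically: it composes both slots with $T^{-1}$ and cancels via the hypothesis $T\circ T^{-1}\equiv I$, passing through an intermediate step that effectively uses $G(x\circ T^{-1})\circ T = G(x)$, i.e.\ an implicit spatial equivariance of the generator. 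Your approach has the virtue of isolating exactly which analytic property of $T$ is doing the work (measure preservation) and does not need any assumption on $G$; the paper's approach uses the stated hypothesis $T\circ T^{-1}=I$ more literally but leans on an unannounced equivariance of $G$. One caveat: ``smooth enough for $T\circ T^{-1}\equiv I$'' by itself only gives a diffeomorphism, not $|\det J_T|\equiv 1$, so your reading of that clause as forcing unit Jacobian is an additional assumption --- you flag this honestly in your closing paragraph, but it is worth stating up front rather than as an interpretation of the hypothesis.
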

\begin{proof}
	Put $\widetilde{y}=y \circ T$ in the left-hand side of the above equation:
	\begin{equation}
		\begin{aligned}
			\mathop{\arg\min}_{G} \mathbb{E}_{x~\widetilde{y}} \mathcal{L}(G(x) \circ T,\widetilde{y}) &=\mathop{\arg\min}_{G} \mathbb{E}_{x~y} \mathcal{L}(G(x) \circ T,y \circ T)\\
			&= \mathop{\arg\min}_{G} \mathbb{E}_{x~y} \mathcal{L}(G(x \circ T^{-1}) \circ T,y \circ T \circ T^{-1})\\
			&= \mathop{\arg\min}_{G} \mathbb{E}_{x~y} \mathcal{L}(G(x) \circ T \circ T^{-1},y \circ T \circ T^{-1})\\
			&= \mathop{\arg\min}_{G} \mathbb{E}_{x~y} \mathcal{L}(G(x),y)
		\end{aligned}
		\label{eq11}
	\end{equation}
\end{proof}

\section{More Rusults}
The results in Section 4.3 show that RegGAN mode is superior to Pix2Pix mode at all noise levels. Here, we give a comparison of error mappings of Pix2Pix and RegGAN under Noise.0(Figure ~\ref{fig8}) and Noise.5(Figure ~\ref{fig9}) respectively. In Figure 8, the RegGAN’s result has smaller errors and smoother texture details compared to Pix2Pix, but the overall difference is not significant. In Figure 9, Pix2Pix is no longer effective, while RegGAN maintains a good performance.

\begin{figure}[t]
	\centerline{\includegraphics[width=\columnwidth]{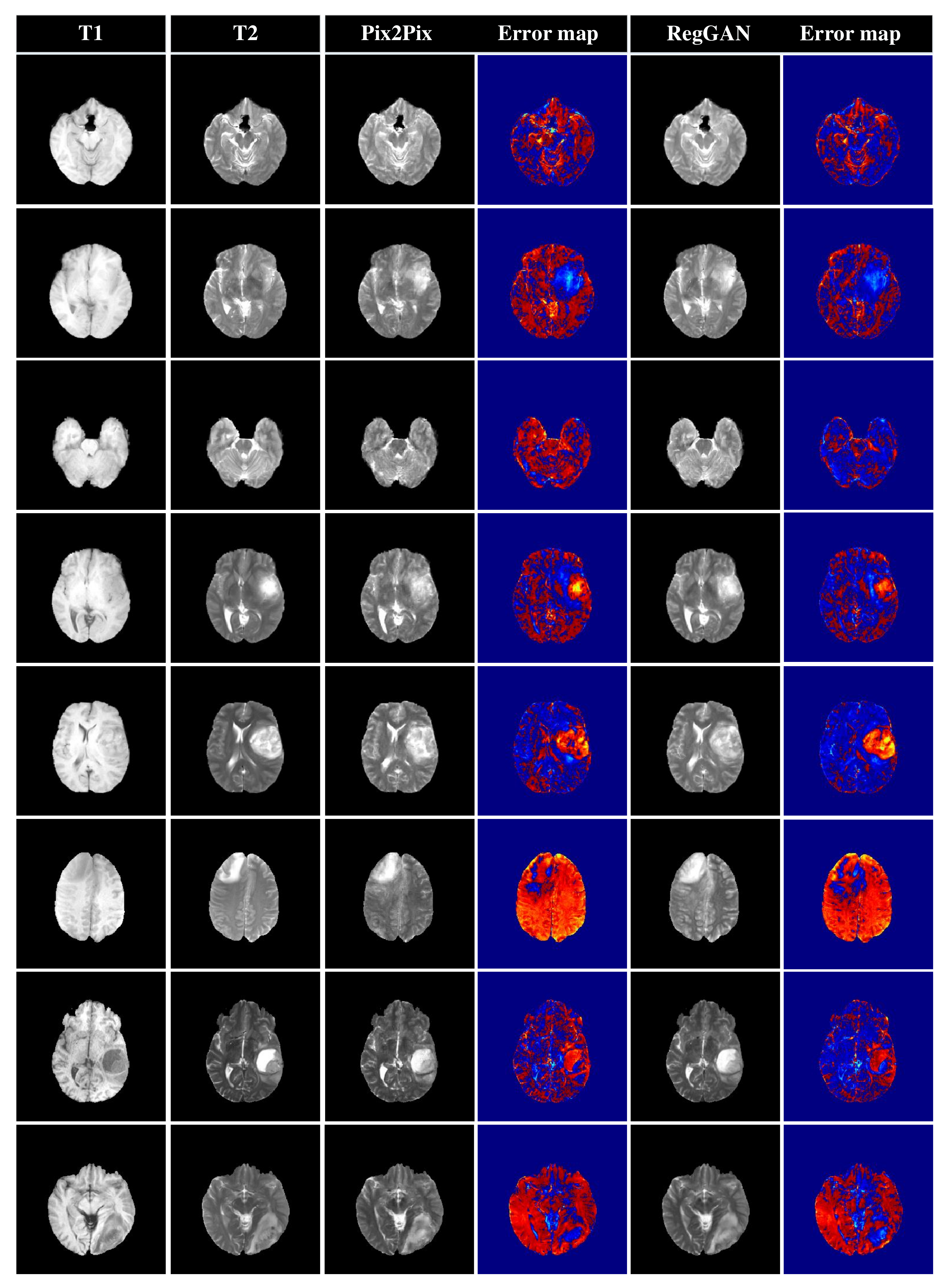}}
	\caption{Qualitative results under Noise.0 (o noise). We show eight examples, which are randomly  selected from total test dataset.}
	\label{fig8}
\end{figure}

\begin{figure}[t]
	\centerline{\includegraphics[width=\columnwidth]{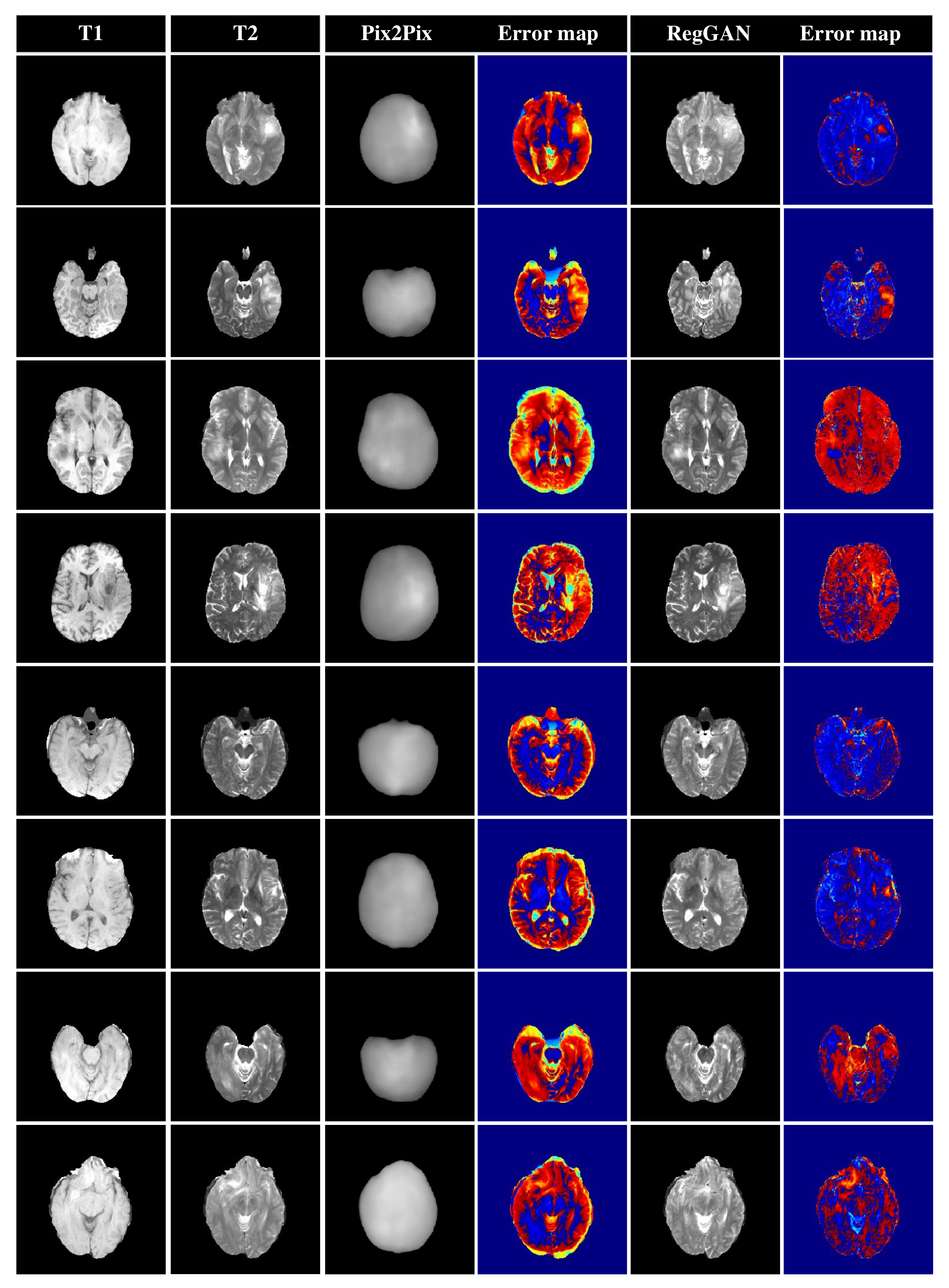}}
	\caption{Qualitative results under Noise.5(largest noise in our experiments). We show eight examples, each one is randomly  selected from total test dataset.}
	\label{fig9}
\end{figure}
\end{document}